\newcommand{\ket}[1]{| #1\rangle}
\newcommand{\bra}[1]{\langle #1 |}
\newcommand{\tr}{\mathrm{tr}\,}
\DeclareMathOperator{\GL}{GL}
\DeclareMathOperator{\SL}{SL}
\DeclareMathOperator{\PSL}{PSL}
\DeclareMathOperator{\U}{U}
\DeclareMathOperator{\SU}{SU}
\DeclareMathOperator{\SO}{SO}
\DeclareMathOperator{\End}{End}
\DeclareMathOperator{\Hom}{Hom}
\DeclareMathOperator{\Spec}{Spec}
\providecommand{\twomat}[4]{\left(\begin{matrix}#1&#2\\#3&#4\end{matrix}\right)}
\providecommand{\stwomat}[4]{\left(\begin{smallmatrix}#1&#2\\#3&#4\end{smallmatrix}\right)}
\theoremstyle{plain}
\newtheorem{thm}{Theorem}[section]
\theoremstyle{definition}
\begin{document}
\title{Effective Rationality for Local Unitary Invariants of Mixed States of Two Qubits}
\author{
Luca Candelori$^{1}$,~Vladimir Y. Chernyak$^{1,2}$,~John R. Klein$^{1}$,~Nick Rekuski$^{1}$
}

\affiliation{
$^{1}$\textit{Department of Mathematics, Wayne State University, Detroit, MI 48201, USA}\\
$^{2}$\textit{Department of Chemistry, Wayne State University, Detroit, MI 48201, USA}
}

\date{\today}
\begin{abstract}
We calculate the field of rational local unitary invariants for mixed states of two qubits, by employing methods from algebraic geometry. We prove that this field is rational (i.e. purely transcendental), and that it is generated by nine algebraically independent polynomial invariants. We do so by constructing a relative section, in the sense of invariant theory, whose Weyl group is a finite abelian group. From this construction, we are able to give explicit expressions for the generating invariants in terms of the Bloch matrix representation of mixed states of two qubits. We also prove similar rationality statements for the local unitary invariants of symmetric mixed states of two qubits. Our results apply to both complex-valued and real-valued invariants. 
\end{abstract}
\maketitle

\section{Introduction}

Quantum entanglement of two-level systems, or qubits, has long been recognized as an important resource in quantum computing and quantum information science, underpinning important concepts such as quantum teleportation and quantum key distribution, among many others (see \cite{Horodecki} for a survey). In most cases these applications require a classification of entanglement, whose goal is to identify all the possible levels of entanglement exhibited by multi-partite quantum states on a scale from no entanglement (i.e. separable states) to maximally entangled states (e.g. Bell  states). One possible approach to this problem is to classify the orbits of local unitary operators, where each orbit consists of states which are equivalent up to local unitary quantum evolution. Unfortunately, this simple approach quickly becomes intractable as soon as the number of parties increases to more than a few qubits. A more fruitul approach is to instead classify the functions on the Hilbert space of a multi-partite quantum system that are invariant under the action of local unitary operators. Restricting to the case of functions that are polynomial in the coordinates, a wide range of tools from computational invariant theory become available \cite{CIT} to solve this classification problem. For the case of pure states, this approach has led to a complete classification of polynomial local unitary invariants for up to four qubits \cite{Brylinski}, \cite{Wallach-Meyer}, \cite{Wallach-book}. In the case of mixed states, a complete classification has been given for the case of two mixed qubits in \cite{King}. It is also possible to obtain a complete classification of polynomial local unitary invariants for certain important qubit spaces, for example symmetric states \cite{Wallach:symmetric} and $X$-states \cite{X-states}. 

The classification of polynomial invariants, just like the classification of the orbits under local unitary action, becomes intractable very quickly: for pure states, as soon as the number of qubits is five or higher; for mixed states, the case of three qubits seems intractable using known methods. There are in principle algorithms for computing polynomial invariants that are guaranteed to terminate in a finite amount of time \cite{CIT}. However, even if one is willing to expend significant computational resources to calculate the next open case, the classification output is likely to be so intricate to decipher that in all probability it would be of little practical use. This is a well-known conundrum in invariant theory, evident already in the classical study of binary forms \cite{BinaryForms}. 

In this article, we propose a new approach to push the entanglement classification boundary a bit further. We propose to study {\em rational} local unitary invariants, as opposed to {\em polynomial} invariants. There are several advantages to this approach. First, well-developed methods from algebraic geometry become available \cite{PV}, \cite{CT-S} in the rational case that invite a study of the classification problem in a conceptual, qualitative manner, rather than by employing the raw, computational, purely algebraic approach of the polynomial case. Second, it turns out in practical applications that the structure of the field of rational invariants tends to be much simpler than that of the ring of polynomial invariants. In fact, it has been observed in many instances that the field of invariants tends to be {\em rational}, that is, {\em purely transcendental} \cite{dolgachev}, \cite{CT-S}. This means that the field is generated by a finite number of algebraically independent invariants (i.e. there are no polynomial relations between the generators), and the number of generators can moreover be calculated  beforehand by a simple dimension count. By contrast, the minimal number of generators for the ring of polynomial invariants cannot be easily ascertained a priori, and these generators tend to satisfy an intricate web of polynomial relations of very high degree. In terms of applications, it does not seem that passing to rational invariants presents a problem: in fact, many well-known entanglement invariants are not even algebraic, let alone polynomial. Such collection of invariants includes von Neumann entropy, concurrence, as well as the Horodecki invariant detecting violation of classical inequalities in mixed states \cite{Horodecki}.

We apply this new approach to the case of mixed states of two qubits. We show (Thms. \ref{thm:rationalityGeneralCase} and \ref{thm:generationGeneralCase}) that the field of rational local unitary invariants in this case is indeed rational, generated by nine invariants. We do so by employing the method of relative sections (i.e. the slice method) \cite{PV},\cite{CT-S}. This method consists in finding a suitable intersection of hyperplanes $S$ (the {\em relative section}, or {\em slice}) within the space of mixed sates so that the classification of rational invariants reduces to that of a finite  group $W(S)$, the {\em Weyl group} of $S$. In our case, we exhibit a relative section $S$ whose Weyl group is not only finite, but abelian. By a well-known 1913 result of E. Fischer  (see e.g. \cite[Prop. 4.3]{CT-S}), this immediately implies the rationality of the field of invariants. In addition, we are able to make our rationality result effective, by exhibiting an explicit set of nine generating invariants, constructed from our explicit knowledge of $S$. It is interesting to observe that the generators are in fact polynomial in the coordinates of the density matrix of a mixed state. More precisely, we express our invariants as polynomials in the entries of the Bloch matrix representation of a mixed state \cite{Gamel}, that is, in terms of 1- and 2-point correlation functions. Our methods allow to get a similar classification `for free' also in the case of symmetric states (Thm. \ref{thm:symmetricGeneration}), which in fact we present first, for ease of exposition. 

In order to apply methods of algebraic geometry, it simpler initially to complexify the space of mixed states and the group of local unitary operators, so as to obtain complex-valued invariants. In this way the ground field is algebraically closed, and the methods of classical algebraic geometry are readily available \cite{PV}. In applications, and especially if the goal is to implement entanglement invariants in terms of experimental measurements, it is important however to restrict to real-valued invariants. We do so in Section \ref{sec:realValued}, where our effective rationality result for mixed states of two qubits (symmetric or not) is extended to the real case (Thm. \ref{thm:generationRealInvariants}). Passing to the real case, where the ground field is no longer algebraically closed, requires slightly more sophisticated technical tools from modern algebraic geometry \cite{CT-S}, \citep{EGA1}, but the core ideas are the same. 

This new approach to the entanglement classification problem in terms of rational invariants can likely be applied to the case of multiple mixed qubits. The authors are currently working on extending the results of this article to this more general case. It might also be possible to extend our results to the case of qu-dits, that is, multi-level multi-partite quantum systems. In general, the authors hope to show in this article, above all, that the investigation of rational local unitary invariants can lead to fruitful progress in the understanding of entanglment of quantum states.

\section{Local Unitary Invariants of Mixed Qubit States}

We begin by recalling the basic setup of mixed qubit states and their local unitary invariants, as can be found in any basic textbook in quantum information science (e.g. \cite{NielsenChuang}). By a {\em qubit} we mean a state $\ket{\psi}$ in a two-dimensional Hilbert space.   More generally, a $n$-qubit (or $n$-qubits) is a state in a tensor product  $\bigotimes^n \mathbb{C}^2$ of two-dimensional Hilbert spaces. A {\em mixed state of $n$-qubits} is a density matrix of the form 
$$
\rho = \sum p_i \ket{\psi_i}\bra{\psi_i}  
$$
where $p_i \geq 0$ are a finite set of real numbers satisfying $\sum p_i = 1$ and  $\{\ket{\psi_i}\}$ is a finite set of $n$-qubits. The set of all such mixed states coincides with the set of Hermitian, non-negative, linear operators on $\bigotimes^n \mathbb{C}^2$ of trace one. In this article, we restrict our attention to the case of mixed states of $n=2$ qubits. 

The quantum evolution of a mixed state $\rho$ is given by conjugation by a unitary matrix, $\rho \mapsto U\rho U^{\dagger}$. In the case of two mixed qubits, we have a special class of unitary operators of the form $U_1\otimes U_2$, where each $U_i$ acts on a separate qubit. We call these {\em local unitary operators}. 

We are interested in calculating algebraic functions $f$ (real- or complex-valued) on mixed states of two qubits that are invariant under the action of local unitary operators, that is, 
\begin{equation}
\label{eqn:invariantAction}
f(U\rho U^{\dagger}) = f(\rho)
\end{equation}
for every unitary $U:\mathbb{C}^2\otimes\mathbb{C}^2 \rightarrow \mathbb{C}^2\otimes\mathbb{C}^2 $ of the form $U = U_1\otimes U_2$. In particular:

\begin{itemize}
\item[(1)] When $f$ is polynomial in the entries of the matrix $\rho$, we say that $f$ is a {\em polynomial local unitary invariant}.
\item[(2)] When $f = f_1/f_2$ is a rational function in the entries of the matrix $\rho$, we say that $f$ is a {\em rational local unitary invariant}.

\end{itemize} 

The set of all polynomial (or rational) local unitary invariants on mixed states of two qubits does not come equipped with a nice algebraic structure, since the space of density matrices is not an algebraic variety (it is however a {\em semi}-algebraic variety \cite{Gamel}). However, if we remove the positivity condition on a density matrix, and instead work with the Liouville space 
$$
\mathscr{L} = \{ \rho : \rho^{\dagger} = \rho, \tr \rho = 1\},
$$ 
we obtain an affine space $\mathscr{L} \simeq \mathbb{A}_{\mathbb{R}}^{15}$ which is, in particular, an affine algebraic variety over the real numbers. The set of (real-valued) polynomial local unitary invariants on $\mathscr{L}$ now forms a ring, denoted by $\mathbb{R}[\mathscr{L}]^G$, where $G = \U(2)\times \U(2)$ is the group of local unitary operators. This ring can now be investigated using the tools and algorithms of classical invariant theory \cite{CIT}. For example, a full presentation for this ring can be found in \cite{King}, where the authors give 25 generators and 63 relations. Similarly, the set of (real-valued) rational local unitary invariants  on $\mathscr{L}$ forms a field, denoted by $\mathbb{R}(\mathscr{L})^G$, whose structure we intend to investigate in this article. In contrast to the polynomial case, we will show that this field has a much simpler structure, consisting of nine invariants an no relations.  

It is easier to first calculate complex-valued invariants, and then afterwards pass to the real case. Therefore we let 
$$
\mathscr{L}_{\mathbb{C}} = \{ \rho : \tr \rho = 1\} \simeq \mathbb{A}^{15}_{\mathbb{C}}
$$ 
be the complexification of the Liouville space $\mathscr{L}$. The complexification of $\U(2)\times \U(2)$, which is given by $\GL_2(\mathbb{C})\times \GL_2(\mathbb{C})$, acts on $\mathscr{L}_{\mathbb{C}}$ by conjugation. In fact, because of \eqref{eqn:invariantAction}, for the calculation of invariants it suffices to consider the action of the subgroup $\SU(2)\times \SU(2)$, whose complexification is $\SL_2(\mathbb{C})\times \SL_2(\mathbb{C})$. We are therefore reduced to the calculation of 
$$
\mathbb{R}(\mathscr{L})^{\SU(2)\times \SU(2)}\otimes \mathbb{C} = \mathbb{C}(\mathscr{L}_{\mathbb{C}})^{\SL_2(\mathbb{C})\times \SL_2(\mathbb{C})},
$$
as follows from a simple application of base-change properties of fields of invariants.

\section{Lie algebra $\mathfrak{sl_2}$ and the adjoint representation}

We shall require some basic facts about  $\mathrm{SL}_2(\mathbb{C})$ and its Lie algebra. Recall that $\mathrm{SL}_2(\mathbb{C})$ is a 3-dimensional complex Lie group, whose Lie algebra is 
$$
\mathfrak{sl}_2 = \{ M \in \mathrm{Mat}_2(\mathbb{C}) : \tr M = 0 \}
$$
with Lie bracket given by the standard commutator bracket $[A, B] = AB - BA$, and Killing form  
$$
(A, B) = 4\,\tr AB.
$$

For ease of notation, let $V = \mathfrak{sl}_2$. Equipped with the symmetric bilinear form $(,)$, the vector space $V$ is an orthogonal space with structure group given by the orthogonal group 
$$
\mathrm{O}(V) = \{ g \in \mathrm{GL}(V) : (gu, gv) = (u,v)\; \forall u,v \in V \} \simeq \mathrm{O}_3(\mathbb{C}).
$$
We also let
$$
\mathrm{SO}(V) = \{ g \in \mathrm{O}(V) : \det g = 1 \} \simeq \mathrm{SO}_3(\mathbb{C})
$$
be its corresponding {\em special} orthogonal group of orientation-preserving transformations.

There is an action of $\mathrm{SL}_2(\mathbb{C})$ on $V$ (the {\em adjoint} action) given by matrix conjugation, which factors through $\mathrm{PSL}_2(\mathbb{C}) = \mathrm{SL}_2(\mathbb{C})/{\pm I}$. This is a three-dimensional  irreducible algebraic representation of $\mathrm{SL}_2(\mathbb{C})$, the spin-1 representation. The Killing form is preserved by this action, and we obtain a well-known isomorphism 
\begin{equation}
\label{eq:PSL2-SO3-iso}
\PSL_2(\mathbb{C}) \simeq \mathrm{SO}(V) \simeq \mathrm{SO}_3(\mathbb{C})
\end{equation} 
under which the adjoint representation $V$ of $\mathrm{SL}_2(\mathbb{C})$  goes over to the standard representation of $\mathrm{SO}_3(\mathbb{C})$. Note that under this isomorphism, for any two vectors $u,v \in V$, 
\begin{align*}
[u,v] &= u\times v \\
(u,v) &= u\cdot v
\end{align*}
that is, the Lie bracket corresponds to the cross-product of two vectors in $V \simeq \mathbb{C}^3$, while the Killing from (by construction) corresponds to the standard dot product $u\cdot v = u_1v_1 + u_2v_2 + u_3v_3$.  We also let 
$$
\lVert v \rVert = \sqrt{(v,v)} 
$$
be the vector norm of a vector $v\in V$. Note that the dot-product is {\em not} an inner product: the norm is in general complex-valued, and in particular it is possible to have non-zero vectors $v\in V$ such that $\lVert v \rVert = 0$. 

The cross-product and dot-product satisfy the following identities, that we record for convenience:
\begin{align}
\label{eqn:identity1}
a\times (b\times c) &=  (a\cdot c)b - (a\cdot b)c \\
\label{eqn:identity2}
(a\times b)\cdot (c\times d) &= (a\cdot c)(b\cdot d)- (b\cdot c)(a\cdot d)
\end{align}
for all $a,b,c,d \in V$. These identities are well-known that are easy to verify directly. Equivalent identities can be written in terms of the Lie bracket and the Killing form on $\mathfrak{sl}_2$.

\section{The Bloch matrix representation}

We now describe a representation of Liouville space that is particularly convenient for the study of local unitary invariants. This representation generalizes the well-known Bloch ball representation for a mixed state of one qubit, and is therefore known as the {\em Bloch representation}. A detailed study of this representation is given in \cite{Gamel}. To define it, consider the Pauli matrices 
\begin{equation}
\label{eqn:PauliMatrices}
\sigma_1 = \twomat 0110, \quad \sigma_2 = \twomat 0{-i}i0, \quad \sigma_3 = \twomat {1}00{-1}
\end{equation}
and let $\sigma_0 = \stwomat 1001$ be the identity matrix. Any mixed state of one qubit can be expressed as a linear combination 
$$
\rho = \frac{1}{2}\left(\sigma_0 + v_1\sigma_1 + v_2\sigma_2 + v_3\sigma_3\right)
$$
with coefficients $v = (v_1, v_2, v_3)^t \in \mathbb{R}^3$ given by $v_i = \tr(\rho \sigma_i)$. The positivity condition on $\rho$ ensures that $v$ lies within a ball of radius one centered at the origin, known as the {\em Bloch ball}. In this representation, quantum evolution on $\rho$ by operators in $\SU(2)$ correspond to rotations on $v$ by elements of $\SO_3(\mathbb{R})$.
Complexifying, the Pauli matrices form a basis for the Lie algebra $V = \mathfrak{sl}_2$ that is orthogonal with respect to the Killing form $(,)$. Removing the positivity condition, we see that the map  $\rho \mapsto v$ is an isomorphism of vector spaces from the Liouville space of one mixed qubit to the adjoint representation $V$ of $\SL_2(\mathbb{C})$. Conjugation by $\SL_2(\mathbb{C})$ on an element $\rho$ of Liouville space corresponds to the standard action of $\SO(V) \simeq \SO_3(\mathbb{C})$ on the vector $v\in V$. The isomorphism $\rho \mapsto v$ is thus precisely the map underlying the isomorphism of groups given by \eqref{eq:PSL2-SO3-iso}.  

For the case of two qubits, there are two copies $V_1, V_2$ of the adjoint representation $V = \mathfrak{sl}_2$ of $\SL_2(\mathbb{C})$ and a basis for the space of operators on the tensor product $V_1\otimes V_2$ can be given by the Dirac matrices $\sigma_i\otimes \sigma_j$, $i,j = 0,1,2,3$. We may then express $\rho \in \mathscr{L}_{\mathbb{C}}$ as a linear combination
$$
\rho = \frac{1}{4}\left( \sum_{i,j = 0}^3 c_{ij} \, \sigma_i\otimes \sigma_j    \right),
$$
with coefficients $c_{ij} \in \mathbb{C}$. It is useful to organize these 16 coefficients in terms of {\em correlation functions}:

\begin{itemize}
\item[(i)] The 0-point correlation function: $$c_{00} = \tr(\rho) = 1$$
\item[(ii)]The 1-point correlation functions:
\begin{align*}
u_{1,i} &= c_{i0} = \tr(\rho\, \sigma_i\otimes \sigma_0) \\
u_{2,j} &= c_{0j} = \tr(\rho\, \sigma_0\otimes \sigma_j),
\end{align*} 
with $i,j = 1,2,3$.
\item[(iii)] The 2-point correlation functions:
$$
c_{ij} = \tr(\rho\, \sigma_i\otimes \sigma_j)
$$
with $i,j = 1,2,3$.
\end{itemize}
We then organize the correlation functions into the {\em Bloch matrix}:
\begin{equation}
\label{eqn:blochRep}
B(\rho) = \left( \begin{tabular}{c|c}
 1 & $u_2^t$ \\
 \hline
 $u_1$ & $C$ \\ 
 \end{tabular} \right) \in \mathbb{C}^{15},
\end{equation}
where $u_1 = (u_{1,i}) \in V_1, u_2 = (u_{2,j}) \in V_2$ are vectors, each belonging to a copy of $V = \mathfrak{sl}_2$, and $C = (c_{ij}) \in \Hom(V_2, V_1)$ is the matrix of a linear tranformation $V_2\rightarrow V_1$. In this representation, the action of $\SL_2(\mathbb{C})\times \SL_2(\mathbb{C})$ (the complexification of the group of local unitary operators) on the density matrix $\rho \in \mathscr{L}_{\mathbb{C}}$ goes over to an action of the special orthogonal operators $(g_1, g_2) \in \SO(V_1)\times \SO(V_2) \simeq \SO_3(\mathbb{C})\times \SO_3(\mathbb{C})$ given by 
\begin{equation}
\label{eqn:BlochAction}
(g_1,g_2)B(\rho) = \left( \begin{tabular}{c|c}
 1 & $u_2^tg_2^t$ \\
 \hline
 $g_1u_1$ & $g_1Cg_2^t$ \\ 
 \end{tabular} \right).
\end{equation}
Therefore, we get a decomposition 
$$
\mathscr{L}_{\mathbb{C}} \simeq V\boxtimes 1\oplus 1\boxtimes V \oplus V\boxtimes  V
$$
of complexified Liouville space into irreducible representations of $\SO_3(\mathbb{C})\times \SO_3(\mathbb{C})$. Here `$1$' denotes the trivial representation of $\SO_3(\mathbb{C})$, and for two representations $V_1, V_2$ of a group $G$, we have used the external tensor product notation $\boxtimes$ to denote the tensor product $V_1\otimes V_2$ as a representation of $G\times G$, where each copy of $G$ acts on each tensor factor. This is to emphasize that $V\boxtimes  V \simeq V_1\otimes V_2$ is indeed irreducible as a representation of $\SO_3(\mathbb{C})\times \SO_3(\mathbb{C})$. Moreover, we have used the fact that $V$ is canonically isomorphic to its dual representation (via the Killing form) so that there is a canonical isomorphism $\End(V) = V\otimes V$.

\section{Relative sections and rationality}
\label{sec:relativeSections}

We now turn to the calculation of the field of invariants
$$
\mathbb{C}(\mathscr{L}_{\mathbb{C}})^{\SL_2(\mathbb{C})\times \SL_2(\mathbb{C})} = \mathbb{C}(\mathscr{L}_{\mathbb{C}})^{\SO_3(\mathbb{C})\times \SO_3(\mathbb{C})},
$$
where the action of the group $\SO_3(\mathbb{C})\times \SO_3(\mathbb{C})$ on complexified Liouville space is given by \eqref{eqn:BlochAction}. For this calculation we will use the method of {\em relative sections} \cite[Ch.2]{PV}, also known as the {\em slice method}. This method applies to the general situation of $G$ a reductive algebraic group acting on an irreducible algebraic variety $X$ over $\mathbb{C}$. In this case, a subvariety $S\subseteq X$ is a {\em relative section} if there exists a $G$-invariant (dense) Zariski-open subset $X_0 \subseteq X$ satisfying the following properties:
    \begin{itemize}
      \item[(i)]{
        The Zariski closure of the orbit space $GS$ is all of $X$ (that is, $\overline{GS} = X$).
      }
      \item[(ii)]{
        Let $N=N(S)=\{g\in G:gS \subseteq S\}$ be the normalizer of $S$, and let $S_0=X_0\cap S$.
        Then $gS_0\cap S_0\neq\emptyset$ implies that $g\in N$.
      }
    \end{itemize} 

Relative sections are important from both a theoretical and computational point of view, since they reduce the calculations of the $G$-invariants to those of the {\em Weyl group} of $S$, a smaller group. To define it, note that the action of the normalizer $N$ on $S$ may not be faithful, that is, there could be elements $g \in N$ that fix every point $x\in S$. It is therefore more convenient to quotient this group by the {\em centralizer} $Z(S) = \{g\in N:gx = x ,\,\forall x \in S\}$. The quotient
$$
W(S) := N(S)/Z(S)
$$     
acts faithfully on $S$, and it is called the {\em Weyl group} of $S$. This terminology originates from the special case of the adjoint representation of a complex semi-simple connected Lie group, where the Cartan subalgebra is a relative section, and its Weyl group corresponds to the Weyl group of the Lie group itself.

Given any relative section $S\subseteq X$ with Weyl group $W = W(S)$, we may take a rational invariant function $f \in  \mathbb{C}(X)^G$ and restrict its domain to $S$, thus obtaining an invariant in  $\mathbb{C}(S)^W$. It can be shown \cite[Ch. 2.5]{PV}  that this restriction of invariants induces an isomorphism 
\begin{equation}
\label{eqn:relSecIso}
\mathbb{C}(X)^G \stackrel{\simeq}\longrightarrow \mathbb{C}(S)^W
\end{equation}
of invariant fields. In particular, if $S$ can be chosen so that $W = W(S)$ is finite (or equivalently, so that $\dim S = \dim X - \dim G$) then the calculation of the field of invariant reduces to the case of a finite group action (of smaller dimension), for which there are much more efficient algorithms \cite{CIT} than for an arbitrary reductive group.

Even when a convenient relative section can be found, there is no a priori reason to expect the field of invariants $\mathbb{C}(X)^G$ to have a simple structure. In fact, for the group $G = \SO_3(\mathbb{C})$ (and products of $G$) and for finite groups, the field of invariants corresponds to the fraction field of the ring of polynomial invariants  $\mathbb{C}[X]^G$ \cite[Lemmas 2.1, 2.2]{CT-S}, which, as we saw earlier, can have an extremely rich and complicated structure. It may therefore come as a surprise that, when computed in practice, the field of invariants tends to be {\em rational}, that is, {\em purely transcendental} \cite{dolgachev}, \cite{CT-S}. This means that $\mathbb{C}(X)^G$ can be generated by $n$ elements $f_1, \ldots, f_n$ (where $n = \dim X - \dim G$) that are {\em algebraically independent}, that is, there are no polynomial relations between the $f_i$'s with coefficients in $\mathbb{C}$. We will show that this is indeed the case for the field $\mathbb{C}(\mathscr{L}_{\mathbb{C}})^{\SO_3(\mathbb{C})\times \SO_3(\mathbb{C})}$.

\section{Symmetric states}
\label{sec:symStates}

We first show rationality and calculate generators for the field of invariants of {\em symmetric} mixed states of two qubits. This is the  subspace $\mathscr{L}_{\rm{sym}} \subseteq \mathscr{L}$ of the Liouville space for mixed states of two qubits consisting of states that are left unchanged by the map $v_1\otimes v_2 \mapsto v_2\otimes v_1$ swapping the ordering of tensors. Equivalently, symmetric states are characterized in terms of the Bloch representation \eqref{eqn:blochRep} by the property that $u = u_1 = u_2$ and $C = C^t$, that is, the matrix of 2-point correlation functions is symmetric.  In particular, for symmetric states the two copies $V_1, V_2$ of the adjoint representation $V = \mathfrak{sl}_2$ corresponding to the two different qubits are identified, and the symmetric matrix $C$ is identified with a linear endomorphism $V\rightarrow V$. 

The full group of local unitary operators $\SU(2)\times \SU(2)$ does not preserve the subspace of symmetric states $\mathscr{L}_{\rm{sym}}$. Instead, we have an action of $\SU(2)$ embedded inside $\SU(2)\times \SU(2)$ via the diagonal map $g \mapsto (g,g)$. Complexifying, the corresponding $\SO_3(\mathbb{C})$-action on the Bloch representation \eqref{eqn:blochRep} is given by 
$$
gB(\rho) = \left( \begin{tabular}{c|c}
 1 & $u^tg^t$ \\
 \hline
 $gu$ & $gCg^t$ \\ 
 \end{tabular} \right),
$$
where $C = C^t$ is symmetric. Symmetric states are therefore isomorphic to $V\otimes \mathrm{Sym}^2(V)$, as a representation of $\PSL_2(\mathbb{C}) \simeq \SO_3(\mathbb{C})$.

We now construct a relative section (in the sense of Section \ref{sec:relativeSections}) for the action of $\mathrm{SO}_3(\mathbb{C})$ on (complexified) symmetric states $\mathscr{L}_{\rm{sym},\mathbb{C}}$. For ease of notation, let 
$$X = \mathscr{L}_{\mathrm{sym},\mathbb{C}}, \quad  G = \mathrm{SO}_3(\mathbb{C}). $$  

Let $S\subseteq X$ be the linear subvariety defined by 
\begin{equation}
\label{equation:symmetricSection}
S := \left\{ \left( \begin{tabular}{c|c}
 $1$ & $u^t$ \\
 \hline
 $u$ & $C$  \\ 
 \end{tabular}\right):  u_2 = u_3 = 0, c_{12} = c_{21} = 0   \right\}, 
\end{equation}
where $(u_i)$ are the coordinates of the vector $u \in V$ and $(c_{ij})$ are the matrix entries of $C \in \End(V)$ with respect to the standard basis $\{e_1, e_2, e_3\}$ of $V \simeq \mathbb{C}^3$. We first show that the orbit space $GS$ is Zariski-dense in  $X$. Given an arbitrary element $x \in X$, represented in Bloch matrix form by $(u,C)$, let 
\begin{equation}
\label{eqn:uv-sym}
v := u\times Cu, \quad w := u\times v.
\end{equation}
Let $X_0 \subseteq X$ be the $G$-invariant Zariski-open subset defined by  $u\cdot u \neq 0$ and $v\cdot v \neq 0$. Note that by  \eqref{eqn:identity2} with $a = b = u$ and $c = d = v$ we have that $w\cdot w = (u\cdot u)(v\cdot v)$, so that we also have $w\cdot w \neq 0$. Because all the vector norms are non-zero, the set $\vec{u} = u/\lVert u \rVert, \vec{v} = v/\lVert v \rVert, \vec{w} = w/\lVert w \rVert$ is an orthonormal basis for $V$. Therefore, there exists an orthogonal transformation $g \in \mathrm{O}(V)$ such that $e_1 = g\vec{u}, e_2 = g\vec{v}, e_3 = g\vec{w}$. Moreover, 
\begin{align*}
(e_1\times e_2)\cdot e_3 &= (g\vec{u} \times g\vec{v})\cdot g\vec{w} \\
& = g(\vec{u}\times \vec{v})  \cdot g\vec{w} \\
& = (\vec{u}\times \vec{v}) \cdot \vec{w}
\end{align*}
so that the orientation of the bases are preserved by $g$, and therefore $g \in \mathrm{SO}(V)$.  Applying $g$ to the element $(u,C)$ we get: 
\begin{align}
\label{eqn:section}
g C g^t &= \left(\begin{array}{ccc}
\vec{u}\cdot C\vec{u} & 0 & \vec{u}\cdot C\vec{w} \\ 
0 & \vec{v}\cdot C\vec{v} & \vec{v}\cdot C\vec{w} \\ 
\vec{u}\cdot C\vec{w} & \vec{v}\cdot C\vec{w} & \vec{w}\cdot C\vec{w}
\end{array} \right), \\
\quad gu &= \left(\begin{array}{c}
\lVert u \rVert \\ 
0 \\ 
0
\end{array} \right)
\end{align} 
since $\vec{u}\cdot C\vec{v} =  \vec{v}\cdot C\vec{u} = 0$, by construction. Therefore the element $(u,C) \in X_0$ belongs to the $G$-orbit of $S$, so that  $\overline{GS} = V$. 

Next, we compute the Weyl group $W(S)$ of $S$. Suppose $(u,C) \in S$ is in the form \eqref{equation:symmetricSection}. In order for $g\in \mathrm{SO}(V)$ to belong to the normalizer $N(S)$, it must first preserve the subspace spanned by  $e_1$, so that $$g = \left( \begin{tabular}{c|c}
 $\pm 1$ & $0$ \\
 \hline
 $0$ & $g_0$  \\ 
 \end{tabular}\right)
$$
for some orthogonal 2-by-2 matrix $g_0$. Writing $C = \left( \begin{tabular}{c|c}
 $a$ & $b^t$ \\
 \hline
 $b$ & $C_0$  \\ 
 \end{tabular}\right)
$ in block form, where $b^t = (0, b_0)$, we deduce that $g$ belongs to $N(S)$ only if $g_0$ preserves the subspace spanned by $b$, so that 
$$
g_0 =  \left(\begin{array}{cc}
\pm 1 & 0 \\ 
0 & \pm 1
\end{array} \right).
$$
Therefore, $N(S)$ is isomorphic to the subgroup of $\mathrm{SO}_3(\mathbb{C})$ consisting of diagonal matrices. The centralizer $Z(S)$ in this case is trivial, so that the Weyl group $W(S)$ equals $N(S)$  and it is given by
$$
W(S) \simeq \left\{ g = \left(\begin{array}{ccc}
\pm 1 & 0 & 0 \\ 
0 & \pm 1 & 0 \\ 
0 & 0 & \pm 1
\end{array} \right): \det g =1     \right\}  \subseteq \mathrm{SO}_3(\mathbb{C}).
$$
This is an abelian group, abstractly isomorphic to $\mathbb{Z}/2\mathbb{Z} \times \mathbb{Z}/2\mathbb{Z}$, the Klein 4-group. For example, it can be generated by the two order 2 matrices
$$
\left(\begin{array}{ccc}
 1 & 0 & 0 \\ 
0 & - 1 & 0 \\ 
0 & 0 & - 1
\end{array} \right), \quad \left(\begin{array}{ccc}
 -1 & 0 & 0 \\ 
0 &  1 & 0 \\ 
0 & 0 & - 1
\end{array} \right).
$$
By the same calculation, we can see that $S$ is in fact a relative section: if $x \in S_0 = X_0\cap X$ and $g\in G$ satisfies $gx \in S_0$, then we must have that $g \in N$. 

The existence of the relative section $S$ immediately implies the rationality of the field of invariants $\mathbb{C}(X)^G$. In particular, since $S$ is a linear representation of a finite abelian group $W$, the field of invariants $\mathbb{C}(S)^W$ is rational, by a well-known 1913 result of E. Fischer (see e.g. \cite[Prop. 4.3]{CT-S}). This field is thus generated by $\dim S - \dim W = 6$ algebraically independent invariants $s_1, \ldots, s_6$. By the restriction isomorphism \eqref{eqn:relSecIso} we deduce that the field $\mathbb{C}(X)^G$ is also rational, generated by six invariants $f_1, \ldots f_6$, whose restriction to $S$ are the six invariants $s_1, \ldots, s_6$.

\section{Orbit separation }

Having shown that the field of invariants $\mathbb{C}(X)^G$ (with $X = \mathscr{L}_{\rm{sym},\mathbb{C}}$ and $G = \SO_3(\mathbb{C})$) is rational, we now turn to the problem of effectively calculating generators. For this problem, we use the `separation of orbits' criterion \cite[Ch. 2.3]{PV}. To explain this geometric criterion, note that when two points $x_1, x_2$ are in the same $G$-orbit, then clearly $f(x_1) = f(x_2)$ for every invariant function, but the converse is not necessarily true. In particular, we say that a set $\{f_1, \ldots, f_n\} \subseteq \mathbb{C}(X)^G$ {\em separates orbits in general position} if there exists a Zariski-open (dense) subset $U\subseteq X$ such that, for every pair of elements $x_1, x_2 \in U$, if $f_i(x_1) = f_i(x_2)$ for all $i=1, \ldots, n$ then $x_1$ and $x_2$ belong to the same $G$-orbit, that is, there exists $g\in G$ such that $gx_1 = x_2$. We then have the following criterion for generation of the field of invariants:

\begin{thm}[\cite{PV}, Lemma 2.1]
\label{thm:separationTest}
Suppose $\{f_1, \ldots, f_n\} \subseteq \mathbb{C}(X)^G$ separates orbits in general position. Then $f_1, \ldots, f_n$ generate the field $\mathbb{C}(X)^G$. 
\end{thm}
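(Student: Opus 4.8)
The plan is to show that the subfield $K := \mathbb{C}(f_1, \ldots, f_n)$ generated by the given invariants is in fact all of $L := \mathbb{C}(X)^G$. Since each $f_i$ is $G$-invariant, the rational map $\phi = (f_1, \ldots, f_n)\colon X \dashrightarrow \mathbb{A}^n$ is constant on $G$-orbits, and its image closure $Y := \overline{\phi(X)}$ is an irreducible variety whose function field is exactly $K$ (because $\phi$ is dominant onto $Y$, the comorphism $\phi^*$ identifies $\mathbb{C}(Y)$ with $\mathbb{C}(f_1,\ldots,f_n)$). The whole strategy is to reinterpret the orbit-separation hypothesis as the statement that $\phi$ generically collapses precisely the $G$-orbits and nothing coarser, and then to convert this into a birationality statement.

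First I would invoke Rosenlicht's theorem to produce a dense $G$-stable Zariski-open subset $U \subseteq X$ admitting a geometric quotient $\pi\colon U \to Q := U/G$ whose fibres are exactly the $G$-orbits meeting $U$, and for which $\mathbb{C}(Q) = L$. Because the $f_i$ are invariant, $\phi$ descends through $\pi$ to a dominant rational map $q\colon Q \dashrightarrow Y$ with $q \circ \pi = \phi$ on $U$, and under the identifications $\mathbb{C}(Q) = L$ and $\mathbb{C}(Y) = K$ the comorphism $q^*\colon K \to L$ is simply the inclusion $K \hookrightarrow L$. Hence $K = L$ will follow as soon as $q$ is shown to be birational.

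Next I would prove that $q$ is generically injective. After intersecting $U$ with the dense open locus on which the $f_i$ separate orbits, I take two general points $x_1, x_2 \in U$ with $q(\pi(x_1)) = q(\pi(x_2))$; then $f_i(x_1) = f_i(x_2)$ for all $i$, so by hypothesis $x_1$ and $x_2$ lie in a common $G$-orbit, whence $\pi(x_1) = \pi(x_2)$ since the fibres of $\pi$ are orbits. This says that a general fibre of $q$ is a single point, so $q$ is generically finite of generic fibre cardinality one; because we are in characteristic $0$ the extension $L/K$ is separable, its degree equals the cardinality of a general fibre, and therefore $[L:K] = 1$, i.e. $q^*$ is an isomorphism and the $f_i$ generate $\mathbb{C}(X)^G$.

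The main obstacle is not any single computation but the careful bookkeeping of open sets: one must restrict simultaneously to the locus where $\phi$ is defined, where Rosenlicht's quotient exists with orbit fibres, and where orbit separation holds, and verify that a single dense $G$-stable open subset survives all three restrictions. The conceptual heart of the argument is the passage from \emph{separates orbits in general position} to \emph{general fibre is a single point}; once that is in hand, the characteristic-zero birationality criterion finishes the proof.
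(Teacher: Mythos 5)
The paper never proves this statement: Theorem \ref{thm:separationTest} is imported verbatim from \cite[Lemma 2.1]{PV}, so there is no in-paper argument to compare against. That said, your proof is correct, and it is essentially the standard proof of this lemma: descend $\phi = (f_1,\ldots,f_n)$ to Rosenlicht's generic geometric quotient $\pi\colon U \to Q$ with $\mathbb{C}(Q) = \mathbb{C}(X)^G$, use orbit separation to make the induced dominant map $q\colon Q \dashrightarrow Y$ injective on a dense open set, and conclude $[\mathbb{C}(X)^G : \mathbb{C}(f_1,\ldots,f_n)] = 1$ from the characteristic-zero degree formula. The only step you compress is the passage from ``$q$ is injective on a dense open subset $W \subseteq Q$'' to ``the general fibre of $q$ is a single point'': a fibre of $q$ over a general point of $Y$ could a priori contain points of $Q \setminus W$. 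This is closed by a dimension count --- generic injectivity on $W$ forces $\dim Q = \dim Y$, so $\overline{q(Q \setminus W)}$ is a proper closed subset of $Y$ and general fibres avoid it --- and since you explicitly flag exactly this open-set bookkeeping as the remaining work, I do not regard it as a gap.

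A further observation: your argument is structurally identical to the proof the paper \emph{does} supply for the real-valued analogue, Theorem \ref{thm:generationRealInvariants}. There the authors invoke Rosenlicht's theorem to obtain $Y = U/G$ with $\mathbb{R}(Y) = \mathbb{R}(X)^G$, map it to $Z = \Spec(\mathbb{R}[f_1,\ldots,f_9])$, use separation of orbits over $\mathbb{C}$ to conclude that the map is radicial (universally injective), and finish with the fact that characteristic zero admits no nontrivial purely inseparable extensions, citing \cite[I.3.5.8]{EGA1}. Your ``generic fibre has cardinality one, hence the extension has degree one by separability'' is precisely the geometric, algebraically-closed-field version of their ``radicial plus characteristic zero implies trivial extension''. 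So your blind proof in effect reconstructs, over $\mathbb{C}$, the technique the authors reserve for the real case.
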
 

In the case of symmetric states, whose Bloch representation recall consists of a vector $u\in V \simeq \mathbb{C}^3$ and a symmetric matrix $C \in \End(V)$, let 
$$
v = u\times Cu, \quad w = u\times v
$$
as in \eqref{eqn:uv-sym}, and let 

\begin{align*}
f_1 = u\cdot u, &\quad f_2 = v\cdot v \\
f_3 = u\cdot Cu, &\quad f_4 = v\cdot  Cv \\
f_5 = w\cdot Cw, &\quad f_6 = w\cdot Cv.  
\end{align*}

We can then show that the set $\{f_1, \ldots, f_6\}$ generates the field of invariants:

\begin{thm}
\label{thm:symmetricGeneration}
The functions $f_1, \ldots f_6 \in \mathbb{C}(\mathscr{L}_{\rm{sym}})$ are algebraically independent $\SO_3(\mathbb{C})$-invariant functions, and they generate the field $\mathbb{C}(\mathscr{L}_{\rm{sym}})^{\SO_3(\mathbb{C})}$.
\end{thm}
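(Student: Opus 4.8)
The plan is to prove the three assertions in sequence — invariance, generation, and algebraic independence — reducing generation to a computation on the relative section $S$ of \eqref{equation:symmetricSection} via the restriction isomorphism \eqref{eqn:relSecIso}.

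Invariance is the routine part. For $g\in\SO(V)$ we have $g^t g = I$ and $\det g = 1$. Hence $Cu \mapsto (gCg^t)(gu) = gCu$, so the vector $Cu$ transforms equivariantly; since the cross product is $\SO(V)$-equivariant and the dot product is $\SO(V)$-invariant, the vectors $v = u\times Cu$ and $w = u\times v$ satisfy $v\mapsto gv$ and $w\mapsto gw$. Each $f_i$ is then a dot product of two such equivariant vectors (one of them possibly multiplied by $C$, using $Cv\mapsto gCv$ exactly as above for $f_6$), hence invariant.

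For generation I would invoke the relative section $S$ together with \eqref{eqn:relSecIso}: since restriction gives an isomorphism $\mathbb{C}(X)^G \cong \mathbb{C}(S)^W$, it suffices to show the restrictions $s_i := f_i|_S$ generate $\mathbb{C}(S)^W$. On $S$ I would use the coordinates $u = (p,0,0)^t$ and $C$ symmetric with $c_{12}=0$, writing $a=c_{11}$, $d=c_{13}$, $e=c_{22}$, $f=c_{23}$, $h=c_{33}$. A direct computation of $v = u\times Cu$ and $w = u\times v$ on $S$ gives $v = (0,-d\,p^2,0)^t$ and $w = (0,0,-d\,p^3)^t$, whence
\begin{align*}
s_1 &= p^2, & s_2 &= d^2 p^4, & s_3 &= a\,p^2, \\
s_4 &= e\,d^2 p^4, & s_5 &= d^2 h\,p^6, & s_6 &= d^2 f\,p^5 .
\end{align*}
On the dense open locus where $s_1,s_2\neq 0$ (i.e. $X_0\cap S$) one solves $p^2 = s_1$, $a = s_3/s_1$, $d^2 = s_2/s_1^2$, $e = s_4/s_2$, $h = s_5/(s_1 s_2)$ and $pf = s_6/s_2$, so that $\mathbb{C}(s_1,\dots,s_6) = \mathbb{C}(a,e,h,p^2,d^2,pf)$.

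The crux is to check that this subfield is all of $\mathbb{C}(S)^W$, and I expect this matching to be the main obstacle. Recalling that the Klein four-group $W$ acts by $p\mapsto \epsilon_1 p$, $d\mapsto \epsilon_1\epsilon_3 d$, $f\mapsto \epsilon_1 f$ (using $\epsilon_1\epsilon_2\epsilon_3 = 1$) while fixing $a,e,h$, the $W$-invariants in $p,d,f$ are exactly the even combinations, and the corresponding invariant field is $\mathbb{C}(p^2,d^2,pf)$; adjoining the fixed coordinates $a,e,h$ yields $\mathbb{C}(S)^W = \mathbb{C}(a,e,h,p^2,d^2,pf)$, which matches the field computed above. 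The delicate point is that the single invariant $f_6$ supplies precisely the product $pf$, which rigidifies the relative sign of $p$ and $f$ so that $W$ acts simply transitively on a generic fibre; without $f_6$ one recovers only $p^2,a,d^2,e,h$ and the field is too small. Equivalently, this step may be phrased as verifying that $\{f_1,\dots,f_6\}$ separates $G$-orbits in general position and then applying Theorem \ref{thm:separationTest}.

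Finally, algebraic independence follows for free from a dimension count. Since $\dim X = 9$ and $\dim G = 3$, the field $\mathbb{C}(X)^G$ has transcendence degree $\dim X - \dim G = 6$ over $\mathbb{C}$, and six generators of a field of transcendence degree six are necessarily algebraically independent.
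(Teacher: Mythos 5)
Your proof is correct, and the key generation step takes a genuinely different route from the paper's. The paper argues by orbit separation (Thm.~\ref{thm:separationTest}): given two generic points $x,x'$ with equal invariant values, it normalizes each into the section form \eqref{eqn:section} and observes that the normalized Bloch matrix is completely determined by $f_1,\ldots,f_6$ (using the cross-product identities to show $u\cdot Cw=-f_2$), so $x$ and $x'$ lie in a common orbit. You instead work entirely on the section: you restrict the invariants to $S$, compute $s_i=f_i|_S$ in the coordinates $(p,a,d,e,f,h)$, determine $\mathbb{C}(S)^W=\mathbb{C}(a,e,h,p^2,d^2,pf)$ by a monomial-weight argument for the Klein four-group, verify $\mathbb{C}(s_1,\ldots,s_6)$ equals this field, and pull back through the restriction isomorphism \eqref{eqn:relSecIso}. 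Your computations check out: $v|_S=(0,-dp^2,0)^t$, $w|_S=(0,0,-dp^3)^t$, the displayed $s_i$, the solution of the coordinates in terms of the $s_i$, and the identification of the $W$-invariant field (the degree count $[\mathbb{C}(p,d,f):\mathbb{C}(p^2,d^2,pf)]=4=|W|$ closes that step). What your route buys: it makes transparent exactly why $f_6$ is indispensable (it supplies $pf$, without which one only gets a transcendence-degree-5 subfield), and it delivers algebraic independence essentially for free, since your explicit description of $\mathbb{C}(S)^W$ visibly has transcendence degree six. What the paper's route buys: it produces explicit normal forms for generic orbits expressed in the invariants, a template reused verbatim for the non-symmetric case in Thm.~\ref{thm:generationGeneralCase}, and it needs only the elementary separation criterion rather than the full restriction isomorphism. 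One small point of care: your closing claim that the transcendence degree equals $\dim X-\dim G$ is valid only because generic stabilizers are finite; in your setup this is not an extra assumption, since the finiteness of $W(S)$ from Section~\ref{sec:symStates} (or your own computation of $\mathbb{C}(S)^W$) already establishes that the degree is six, so you should cite that rather than the bare dimension count.
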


\begin{proof}
We first show that these six functions are invariant. Note that the action of $g \in \SO_3(\mathbb{C})$ sends $u \mapsto gu$ and $C\mapsto gCg^t$, so that $v = u\times Cu$ is sent to $gv$ and $w = u\times v$ to $gw$. It follows easily that all the six functions $f_1, \ldots, f_6$ are invariant. For example, for $f_6$ we have 
$$
gw \cdot gCg^tgv = gw\cdot gCv = w\cdot Cv,
$$
and similarly for the other $f_i$'s. 

Next, we show that these six functions generate the whole field of rational invariants, by showing that they separate orbits in general position (Thm. \ref{thm:separationTest}). Let $X_0\subseteq X$ be the $G$-invariant, Zariski-open set given by $u\cdot u \neq 0$ and $v\cdot v \neq 0$. Suppose        
$$
x = \left( \begin{tabular}{c|c}
 1 & $u^t$ \\
 \hline
 $u$ & $C$ \\ 
 \end{tabular} \right),\quad 
 x' = \left( \begin{tabular}{c|c}
 1 & $(u')^t$ \\
 \hline
 $u'$ & $C'$ \\ 
 \end{tabular} \right),
$$
are two elements of $X_0$ satisfying $f_i(x) = f_i(x')$ for $i=1,\ldots, 6$. We need to show that $x,x'$ are in the same $\SO_3(\mathbb{C})$-orbit. By the same argument as in Section \ref{sec:symStates}, there exist an element $g\in \SO_3(\mathbb{C})$ so that $gx$ is in the form \eqref{eqn:section}. By identity \eqref{eqn:identity1} with $a=b=u$ and $c = Cu$, we may write 
$$
w = (u\cdot Cu)u - (u\cdot u)Cu
$$
so that 
$$
u\cdot Cw = Cu \cdot w = (u\cdot Cu)^2 - (u\cdot u)(Cu\cdot Cu).
$$
On the other hand, by identity \eqref{eqn:identity2} with $a=c=u$ and $b=d=Cu$ we have 
$$
v\cdot v = (u\times Cu) \cdot (u\times Cu) = (u\cdot u)(Cu\cdot Cu) - (u\cdot Cu)^2 
$$
so that
$$
u\cdot Cw = -(v\cdot v) = -f_2.
$$ 
We may then express $gx$ in terms of the invariant functions $f_1(x), \ldots, f_6(x)$ as follows:

\begin{align*}
g C g^t &= \left(\begin{array}{ccc}
f_3/f_1& 0 & -f_2^{1/2}/f_1 \\ 
0 & f_4/f_2 &  f_6/f_2f_1^{1/2} \\ 
-f_2^{1/2}/f_1 & f_6/f_2f_1^{1/2} & f_5/f_1f_2
\end{array} \right), \\
\quad gu &= \left(\begin{array}{c}
f_1^{1/2} \\ 
0 \\ 
0
\end{array} \right).
\end{align*} 
Similarly, we can find $g'\in \SO_3(\mathbb{C})$ so that $g'x'$ is an identical expression in terms of the functions $f_1(x'), \ldots, f_6(x')$. Because $f_i(x) = f_i(x')$ for all $i$, it follows that $gx = g'x'$ and therefore $x = g^{-1}g'x'$, so that $x,x'$ are in the same orbit. 

We have thus proved that $f_1, \ldots, f_6$ generate the field $\mathbb{C}(\mathscr{L}_{\rm{sym}})^{\SO_3(\mathbb{C})}$. We also know that this field is purely transcendental of transcendence degree six. Therefore $\{f_1, \ldots, f_6\}$ must be algebraically independent, otherwise they would generate a field of transcendence degree less than six.

\end{proof}

Note that the generators $f_1, \ldots, f_6$ are in fact polynomial functions, and not just rational functions. They also have particularly simple expressions in terms of the coordinates $u = (u_i)$ and $C = (c_{ij})$, which can be obtained by applying well-known formulas for the dot-product and the cross-product.

\section{General case}

We now apply the technique of relative sections to the case of the full Liouville space $\mathscr{L}_{\mathbb{C}}$ of mixed states of two qubits. We show that the field of invariants is rational, and provide a simple set of generating elements for this field. For ease of notation, let
$$
X = \mathscr{L}_{\mathbb{C}}, \quad G = \SO_3(\mathbb{C})\times \SO_3(\mathbb{C})
$$
where recall the action of $G$ on an element $x \in X$ in Bloch matrix form is given by \eqref{eqn:BlochAction}. Denote by $V_1, V_2$ the two copies of the adjoint representation $V = \mathfrak{sl}_2$ corresponding to each qubit, so that an element $x \in X$ is represented in Bloch matrix form by a triple $(u_1, u_2, C)$ where $u_i \in V_i$ and $C \in \Hom(V_2, V_1)$. Let 
\begin{equation}
\label{equation:generalSection}
S := \left\{ \left( \begin{tabular}{c|c}
 $1$ & $u_2^t$ \\
 \hline
 $u_1$ & $C$  \\ 
 \end{tabular}\right):  u_{1,2} = u_{1,3} = 0,u_{2,2} = u_{2,3} = 0, c_{12} = c_{21} = 0   \right\}, 
\end{equation}
where $u_1 = (u_{1,i}) \in V_1, u_2 = (u_{2,i}) \in V_1,  C = (c_{i,j}) \in \Hom(V_2,V_1)$ are the coordinates of each component of the Bloch matrix with respect to the standard basis $\{e_1, e_2, e_3\}$ of $\mathbb{C}^3$. We first show that this a relative section for the action of $G$ on $X$ (in the sense of Section \ref{sec:relativeSections}). Note that the Killing form on $V$ gives a canonical isomorphism between $V$ and its dual $V^*$, so that the matrix transpose $C^T$ can be viewed as a linear transformation $C^T \in \Hom(V_1,V_2)$. In this notation, let 
\begin{align*}
v_1 := u_1 \times Cu_2 \in V_1 ,  \quad& w_1 := u_1\times v_1 \in V_1\\
v_2 := u_2 \times C^Tu_1 \in V_2, \quad& w_2 := u_2\times v_2 \in V_2,
\end{align*}
and let $X_0 \subseteq X$ be the Zariski-open (dense) subset defined by $u_i\cdot u_i \neq 0$ and $v_i \cdot v_i \neq 0$, for $i=1,2$. Let $x = (u_1, u_2, C) \in X_0$ and let $\vec{u_i} = u_i/\lVert u_i \rVert$, $\vec{v_i} = u_i/\lVert v_i \rVert, \vec{w_i}= w_i/\lVert w_i \rVert, i=1,2$ be the corresponding normalized unit vectors. As explained in Section \ref{sec:symStates}, we may choose $g_1\in \SO(V_1) = \SO_3(\mathbb{C})$ so that $e_1 = g_1\vec{u_1}, e_2 = g_1\vec{v_1}, e_3 = g_1\vec{w_1}$, and $g_2\in \SO(V_2) = \SO_3(\mathbb{C})$ so that $e_1 = g_2\vec{u_2}, e_2 = g_2\vec{v_2}, e_3 = g_2\vec{w_2}$. We then have 

\begin{align}
\label{eqn:generalSection}
g_1 C g_2^t &= \left(\begin{array}{ccc}
\vec{u_1}\cdot C\vec{u_2} & 0 & \vec{u_1}\cdot C\vec{w_2} \\ 
0 & \vec{v_1}\cdot C\vec{v_2} & \vec{v_1}\cdot C\vec{w_2} \\ 
\vec{w_1}\cdot C\vec{u_2} & \vec{w_1}\cdot C\vec{v_2} & \vec{w_1}\cdot C\vec{w_2}
\end{array} \right), \\
\quad g_1u_1 &= \left(\begin{array}{c}
\lVert u_1 \rVert \\ 
0 \\ 
0
\end{array} \right),\quad
\quad g_2u_2 = \left(\begin{array}{c}
\lVert u_2 \rVert \\ 
0 \\ 
0
\end{array} \right),
\end{align} 
so that $GS$ is Zariski-dense in $X$. The calculation of the Weyl group $W(S)$ is similar to the case of symmetric states. In particular, note first that if $g = (g_1,g_2) \in G$ preserves the subvariety $S$, then each $g_i$ must preserve the subspace of $V_i$ spanned by $e_1$, so that $g_1,g_2$ are of the form 

$$
g_1 = \left( \begin{tabular}{c|c}
 $\pm 1$ & $0$ \\
 \hline
 $0$ & $h_1$  \\ 
 \end{tabular}\right), \quad 
 g_2 = \left( \begin{tabular}{c|c}
 $\pm 1$ & $0$ \\
 \hline
 $0$ & $h_2$  \\ 
 \end{tabular}\right), 
$$
with $h_1,h_2$ orthogonal 2-by-2 matrices. Given any element $x = (u_1,u_2,C) \in S$, write  $C = \left( \begin{tabular}{c|c}
 $a$ & $b_2^t$ \\
 \hline
 $b_1$ & $C_0$  \\ 
 \end{tabular}\right)
$ in block form, where $b_i^t = (0, b_{i0}), i=1,2$. Then $g$ belongs to $N(S)$ if and only if each $h_i$ preserves the subspace spanned by each $b_i$, so that both $g_1, g_2$ are of the form 
$$
g_1,g_2 =  \left(\begin{array}{cc}
\pm 1 & 0 \\ 
0 & \pm 1
\end{array} \right).
$$
It now follows that the Weyl group
$$
W(S) \simeq K_1\times K_2
$$ 
is a finite abelian group of order 16, abstractly isomorphic to two copies $K_1, K_2$ of $\mathbb{Z}/2\mathbb{Z}\times \mathbb{Z}/2\mathbb{Z}$, the Klein 4-group. This calculation also shows that $S$ is a relative section, since if two elements $x_1, x_2 \in S$ are in the same $G$-orbit, then they must be in the same $W(S)$-orbit. 

As in the case of symmetric states, we may use the relative section $S$ to give a complete description of the field of invariants $\mathbb{C}(X)^G$. First, since the field of invariants of any finite-dimensional complex representation of a finite abelian group is rational, again by Fischer's Theorem.  Applying \eqref{eqn:relSecIso}, we deduce the following:

\begin{thm}
\label{thm:rationalityGeneralCase}
The field of local unitary invariants $\mathbb{C}(\mathscr{L})^{\SO_3(\mathbb{C})\times \SO_3(\mathbb{C}) }$ of mixed states of two qubits is rational (i.e. purely transcendental) of degree 9. 
\end{thm}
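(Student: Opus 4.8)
The plan is to combine the relative-section machinery developed above with Fischer's theorem, exactly as in the symmetric case. The preceding discussion has already exhibited $S$ as a relative section for the action of $G = \SO_3(\mathbb{C})\times\SO_3(\mathbb{C})$ on $X = \mathscr{L}_{\mathbb{C}}$, and has computed its Weyl group $W(S)\simeq K_1\times K_2$ to be finite and abelian of order $16$. Applying the restriction isomorphism \eqref{eqn:relSecIso} therefore gives
\begin{equation*}
\mathbb{C}(X)^G \stackrel{\simeq}{\longrightarrow} \mathbb{C}(S)^{W(S)},
\end{equation*}
so it suffices to establish rationality of the right-hand side.

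Next I would observe that $S$ carries a linear $W(S)$-action. Indeed, in the Bloch coordinates $(u_1,u_2,C)\in V_1\oplus V_2\oplus\Hom(V_2,V_1)$ the $G$-action \eqref{eqn:BlochAction} is linear, and $S$ is cut out by the vanishing of certain coordinates in \eqref{equation:generalSection}, hence is a linear subspace; the normalizer $N(S)$, and therefore $W(S)$, acts on $S$ by restriction of this linear action. Since $W(S)$ is a finite abelian group acting linearly on the finite-dimensional $\mathbb{C}$-vector space $S$, Fischer's $1913$ theorem (\cite[Prop. 4.3]{CT-S}) applies and shows that $\mathbb{C}(S)^{W(S)}$ is purely transcendental over $\mathbb{C}$. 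Transporting this across the displayed isomorphism yields the rationality of $\mathbb{C}(X)^G$.

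Finally I would pin down the transcendence degree by a dimension count. Since $W(S)$ is finite, $\mathbb{C}(S)$ is a finite (Galois) extension of $\mathbb{C}(S)^{W(S)}$, so the two fields share the same transcendence degree, namely $\dim S$. Via the isomorphism above, $\operatorname{trdeg}_{\mathbb{C}}\mathbb{C}(X)^G = \dim S = \dim X - \dim G$. Here $\dim X = 15$ and $\dim G = 3+3 = 6$, giving degree $9$; equivalently, one reads $\dim S = 1+1+7 = 9$ directly from \eqref{equation:generalSection}, where two of the three coordinates of each $u_i$ and two off-diagonal entries of $C$ are set to zero.

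I do not expect a genuine obstacle: all the substantive work—verifying that $S$ meets conditions (i) and (ii) for a relative section, and showing $W(S)$ is finite abelian—has already been carried out, and Fischer's theorem is precisely tailored to the resulting finite abelian quotient. The only points requiring care are confirming that the $W(S)$-action on $S$ is genuinely \emph{linear} (so that Fischer's theorem applies verbatim rather than merely to a torsor) and that the ground field contains the relevant roots of unity, which is automatic here since $\mathbb{C}$ is algebraically closed and $W(S)$ has exponent $2$.
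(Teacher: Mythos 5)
Your proposal is correct and follows essentially the same route as the paper: invoke the relative section $S$ and its finite abelian Weyl group $W(S)\simeq K_1\times K_2$ already computed, apply Fischer's theorem to the linear $W(S)$-action on $S$, transfer rationality through the restriction isomorphism \eqref{eqn:relSecIso}, and read off the transcendence degree $9$ from $\dim S = \dim X - \dim G$. Your added care in checking that the $W(S)$-action on $S$ is linear (so Fischer applies verbatim) is a worthwhile explicit remark that the paper leaves implicit.
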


Second, we can use $S$ and the orbit separation criterion (Thm. \ref{thm:separationTest}) to give an explicit, simple construction of nine (algebraically independent) generators $f_1, \ldots, f_9$ of $\mathbb{C}(X)^G$. In particular, let 
\begin{align}
\label{eqn:generalInvariants}
f_1 &= u_1\cdot u_1,\quad  f_2 = u_2\cdot u_2,\quad  f_3 = v_1\cdot v_1, \quad f_4 = v_2\cdot v_2, \\
f_5 &= u_1\cdot Cu_2,\quad f_6 = v_1\cdot Cv_2,\quad f_7 = w_1\cdot Cw_2,  \\
f_8 &= v_1\cdot Cw_2, \quad f_9 = w_1\cdot Cv_2.
\end{align} 

Then we have:

\begin{thm}
\label{thm:generationGeneralCase}
The nine invariants $f_1, \ldots f_9$ are algebraically independent and they generate the field of local unitary invariants $\mathbb{C}(\mathscr{L})^{\SO_3(\mathbb{C})\times \SO_3(\mathbb{C}) }$ of mixed states of two qubits.
\end{thm}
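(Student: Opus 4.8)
The plan is to follow the exact same two-step template used in the proof of Theorem 1.7 for symmetric states: first verify that $f_1,\dots,f_9$ are genuinely $G$-invariant, then show they separate orbits in general position, so that generation follows from Theorem 1.5. Algebraic independence will then be deduced for free from the known transcendence degree.

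For invariance, I would observe that under $(g_1,g_2)\in G$ the Bloch data transforms as $u_1\mapsto g_1u_1$, $u_2\mapsto g_2u_2$, and $C\mapsto g_1Cg_2^t$. A short computation then shows $v_1=u_1\times Cu_2\mapsto g_1v_1$ and $v_2=u_2\times C^Tu_1\mapsto g_2v_2$, and likewise $w_1\mapsto g_1w_1$, $w_2\mapsto g_2w_2$: the key point is that $g_1$ and $g_2$ preserve cross-products and dot-products on their respective copies $V_1,V_2$, and that $C^T\mapsto g_2C^Tg_1^t$. Each $f_i$ is then a dot-product between a vector in $V_1$ and the image under $C$ of a vector in $V_2$ (or a dot-product within a single $V_i$), so the $g_1,g_2$ factors cancel exactly as in the display $gw\cdot gCg^tgv = w\cdot Cv$ from the symmetric case. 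I would write out one representative case, say $f_8=v_1\cdot Cw_2\mapsto g_1v_1\cdot g_1Cg_2^tg_2w_2=v_1\cdot Cw_2$, and assert the rest are identical.

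For orbit separation, let $X_0$ be the Zariski-open set where $u_i\cdot u_i\neq 0$ and $v_i\cdot v_i\neq 0$ for $i=1,2$. Given $x=(u_1,u_2,C)$ and $x'=(u_1',u_2',C')$ in $X_0$ with $f_i(x)=f_i(x')$ for all $i$, I would use the construction preceding \eqref{eqn:generalSection} to find $(g_1,g_2)\in G$ carrying $x$ into the normal form \eqref{eqn:generalSection}, where $g_1u_1=(\lVert u_1\rVert,0,0)^t$, $g_2u_2=(\lVert u_2\rVert,0,0)^t$, and $g_1Cg_2^t$ has the displayed cross-shape. The crux is then to rewrite every entry of this normal form purely in terms of $f_1,\dots,f_9$. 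The diagonal and upper-right entries are immediate: the $(1,1)$ entry is $\vec u_1\cdot C\vec u_2=f_5/\sqrt{f_1f_2}$, the $(2,2)$ entry is $f_6/\sqrt{f_3f_4}$, and $f_7$ gives the $(3,3)$ entry after dividing by the appropriate norms. For the remaining off-diagonal entries — namely $\vec u_1\cdot C\vec w_2$, $\vec w_1\cdot C\vec u_2$, $\vec v_1\cdot C\vec w_2$, $\vec w_1\cdot C\vec v_2$ — I would invoke identities \eqref{eqn:identity1} and \eqref{eqn:identity2} to express the ``missing'' pairings $u_1\cdot Cw_2$ and $w_1\cdot Cu_2$ in terms of quantities already recorded, exactly as the symmetric-state proof reduces $u\cdot Cw$ to $-f_2$. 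Concretely, $w_2=(u_2\cdot C^Tu_1)u_2-(u_2\cdot u_2)C^Tu_1$, so $u_1\cdot Cw_2=(u_1\cdot Cu_2)^2-(u_2\cdot u_2)(C^Tu_1\cdot C^Tu_1)$, and a companion application of \eqref{eqn:identity2} identifies the second term with $v_2\cdot v_2=f_4$ up to the already-known quantity $f_5$; symmetrically $w_1\cdot Cu_2$ reduces using $v_1\cdot v_1=f_3$. This forces the full normal form of $g_1Cg_2^t$ to depend only on $f_1(x),\dots,f_9(x)$. Running the identical normalization on $x'$ produces the same matrix, so $g_1xg_2=g_1'x'g_2'$, whence $x$ and $x'$ lie in one $G$-orbit. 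Finally, since Theorem \ref{thm:rationalityGeneralCase} gives that the field is purely transcendental of degree nine, nine generators must be algebraically independent, completing the proof.

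I expect the main obstacle to be the bookkeeping in the off-diagonal reduction. Unlike the symmetric case, here $C$ is not symmetric, so the two families $\{v_1,w_1\}\subset V_1$ and $\{v_2,w_2\}\subset V_2$ are built from $C$ and $C^T$ respectively and do not coincide; one must keep careful track of which copy of $V$ each vector lives in and confirm that the four mixed pairings $\vec u_i\cdot C\vec w_j$ and $\vec w_i\cdot C\vec u_j$ are all recoverable from the nine listed invariants together with the diagonal data. The verification that exactly nine invariants suffice — as opposed to needing an extra pairing such as $u_1\cdot Cw_2$ as an independent generator — rests entirely on these cross-product identities collapsing the apparent extra degrees of freedom, and that collapse is the delicate step to get right.
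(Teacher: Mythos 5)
Your proposal is correct and follows essentially the same route as the paper's own proof: verify invariance from the equivariance of $u_i\mapsto g_iu_i$, $C\mapsto g_1Cg_2^t$, then separate orbits in general position by normalizing to the relative-section form \eqref{eqn:generalSection} and using identities \eqref{eqn:identity1}--\eqref{eqn:identity2} to reduce the cross pairings to $u_1\cdot Cw_2=-f_4$ and $w_1\cdot Cu_2=-f_3$, with algebraic independence then forced by the transcendence degree from Theorem \ref{thm:rationalityGeneralCase}. The "delicate step" you flag is exactly the one the paper resolves, and your reduction of it is the same as theirs.
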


\begin{proof}
The proof is analogous to that of Thm. \ref{thm:symmetricGeneration}, for the case of symmetric states. Note first  that the action of $g = (g_1,g_2) \in G $ sends $u_i \mapsto g_iu_i$ and $C\mapsto g_1Cg_2^t$, so that $v_i$ is sent to $g_iv_i$ and $w_i = u_i\times v_i$ is sent to $g_iw_i$. From this it follows easily that all the nine functions $f_1, \ldots, f_9$ are $G$-invariant. Next, we show that these nine functions separate orbits in general position (Thm. \ref{thm:separationTest}). Let $X_0\subseteq X$ be the $G$-invariant, Zariski-open set given by $u_i\cdot u_i \neq 0$ and $v_i\cdot v_i \neq 0, i=1,2$. Suppose        
$$
x = \left( \begin{tabular}{c|c}
 1 & $u_2^t$ \\
 \hline
 $u_1$ & $C$ \\ 
 \end{tabular} \right),\quad 
 x' = \left( \begin{tabular}{c|c}
 1 & $(u_2')^t$ \\
 \hline
 $u_1'$ & $C'$ \\ 
 \end{tabular} \right),
$$
are two elements of $X_0$ satisfying $f_i(x) = f_i(x')$ for $i=1,\ldots, 9$. By the same argument used earlier to prove that $GS$ is Zariski-dense , there exist an element $g = (g_1,g_2)\in G$ so that $gx$ is in the form \eqref{eqn:generalSection}. Using the same identities for cross-products as in Thm. \ref{thm:symmetricGeneration}, we can write 
\begin{align*}
w_1\cdot Cu_2 &= (u_1\cdot Cu_2)^2 - (u_1\cdot u_1)(Cu_2\cdot Cu_2) = -v_1\cdot v_1 = -f_3 \\
u_1\cdot Cw_2 &= C^tu_1\cdot w_2 = (u_2\cdot C^tu_1)^2 - (u_2\cdot u_2)(C^tu_1\cdot C^tu_1) = -v_2\cdot v_2 = -f_4,
\end{align*}
so that we may express $gx$ in terms of the invariant functions $f_1(x), \ldots, f_9(x)$ as follows:

\begin{align*}
g_1 C g_2^t &= \left(\begin{array}{ccc}
f_5/(f_1f_2)^{1/2}& 0 & -f_4^{1/2}/(f_1f_2)^{1/2} \\ 
0 & f_6/(f_3f_4)^{1/2} &  f_8/(f_2f_3f_4)^{1/2} \\ 
-f_3^{1/2}/(f_1f_2f_3)^{1/2} & f_9/(f_1f_3f_4)^{1/2} & f_7/(f_1f_2f_3f_4)^{1/2}
\end{array} \right), \\
\quad g_1u_1 &= \left(\begin{array}{c}
f_1^{1/2} \\ 
0 \\ 
0
\end{array} \right), \quad g_2u_2 = \left(\begin{array}{c}
f_2^{1/2} \\ 
0 \\ 
0
\end{array} \right).
\end{align*} 
Similarly, we can find $g' = (g_1', g_2')\in G$ so that $g'x'$ is an identical expression in terms of the functions $f_1(x'), \ldots, f_9(x')$. Because $f_i(x) = f_i(x')$ for all $i$, it follows that $gx = g'x'$ and therefore $x = g^{-1}g'x'$, so that $x,x'$ are in the same $G$-orbit. By Thm. \ref{thm:separationTest}, the set $f_1, \ldots, f_9$ must generate the field $\mathbb{C}(X)^{G}$, which is rational of degree nine, and so  $\{f_1, \ldots, f_9\}$ must be algebraically independent.  
\end{proof}

As is the case for symmetric states, the generators $f_1, \ldots, f_9$ given are polynomial invariants, and their precise expression in terms of the coordinates of $x = (u_1, u_2, C)$ can easily be found using well-known formulas for cross-products and dot-products.

\section{Real-valued invariants}
\label{sec:realValued}

For applications, it is also important to find generators for the field of real-valued rational invariants $\mathbb{R}(\mathscr{L})^{\SU(2)\times\SU(2)}$, and not just for its complexification $\mathbb{C}(\mathscr{L}_{\mathbb{C}})^{\SO_3(\mathbb{C})\times \SO_3(\mathbb{C}) }$. Because the generating polynomial invariants $f_1, \ldots, f_9$ given in \eqref{eqn:generalInvariants} all have coefficients in $\mathbb{R}$ (in fact, they have {\em integer} coefficients), they all restrict to $\mathscr{L}$ to give real-valued invariants $\mathbb{R}(\mathscr{L})^{\SU(2)\times\SU(2)}$. We claim that these invariants generate  the whole field, so that in particular we also get rationality over $\mathbb{R}$:

\begin{thm}
\label{thm:generationRealInvariants}
The field of real-valued invariants $\mathbb{R}(\mathscr{L})^{\SU(2)\times\SU(2)}$ for mixed states of two qubits is rational, generated by the algebraically independent invariants $f_1, \ldots, f_9$ given in \eqref{eqn:generalInvariants}. 
\end{thm}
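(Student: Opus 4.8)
The plan is to \emph{descend} the complex result (Thms.~\ref{thm:rationalityGeneralCase} and~\ref{thm:generationGeneralCase}) to $\mathbb{R}$, exploiting that the nine generators of \eqref{eqn:generalInvariants} are defined over $\mathbb{R}$ (indeed over $\mathbb{Z}$). This feature is essential: $\mathbb{C}$-rationality of a field of invariants does not in general imply $\mathbb{R}$-rationality, and it is precisely the availability of explicit real generators that rescues the argument. Write $K = \mathbb{R}(\mathscr{L})^{\SU(2)\times\SU(2)}$. First I would verify that each $f_i$ lies in $K$: being invariant under $\SL_2(\mathbb{C})\times\SL_2(\mathbb{C})$, hence under the subgroup $\SU(2)\times\SU(2)$, and having real coefficients, the $f_i$ restrict to real-valued local unitary invariants on $\mathscr{L}$. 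Moreover the $f_i$ are algebraically independent over $\mathbb{R}$: this is immediate from their algebraic independence over $\mathbb{C}$ (Thm.~\ref{thm:generationGeneralCase}), since a polynomial relation over $\mathbb{R}$ would in particular be one over $\mathbb{C}$. In particular $L := \mathbb{R}(f_1,\ldots,f_9)\subseteq K$ is a purely transcendental extension of $\mathbb{R}$ of degree $9$.

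It then remains to prove the reverse inclusion $K\subseteq L$, and here I would pass to the complexification and use faithful flatness of $\mathbb{C}$ over $\mathbb{R}$. On the one hand, since the $f_i$ are algebraically independent over $\mathbb{R}$, one computes $L\otimes_{\mathbb{R}}\mathbb{C} = \mathbb{C}(f_1,\ldots,f_9)$; concretely, every nonzero $p\in\mathbb{C}[f_1,\ldots,f_9]$ becomes invertible after tensoring because $p\bar{p}$ has real coefficients and so lies in the multiplicative set already inverted in $L$. On the other hand, the base-change identity $\mathbb{R}(\mathscr{L})^{\SU(2)\times\SU(2)}\otimes\mathbb{C} = \mathbb{C}(\mathscr{L}_{\mathbb{C}})^{\SL_2(\mathbb{C})\times\SL_2(\mathbb{C})}$ recorded in Section~2, combined with Thm.~\ref{thm:generationGeneralCase}, gives $K\otimes_{\mathbb{R}}\mathbb{C} = \mathbb{C}(\mathscr{L}_{\mathbb{C}})^{\SO_3(\mathbb{C})\times\SO_3(\mathbb{C})} = \mathbb{C}(f_1,\ldots,f_9)$. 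Thus the inclusion $L\hookrightarrow K$ becomes, after $\otimes_{\mathbb{R}}\mathbb{C}$, a map $\mathbb{C}(f_1,\ldots,f_9)\to\mathbb{C}(f_1,\ldots,f_9)$ whose image is a subfield containing $\mathbb{C}$ and every $f_i$, hence all of the target; that is, $L\otimes_{\mathbb{R}}\mathbb{C}\twoheadrightarrow K\otimes_{\mathbb{R}}\mathbb{C}$ is surjective. Since $\mathbb{C}$ is faithfully flat over $\mathbb{R}$, the cokernel of $L\hookrightarrow K$ can vanish after tensoring with $\mathbb{C}$ only if it already vanishes, so $L = K$. This shows that $f_1,\ldots,f_9$ generate $K$, and, as $L$ is purely transcendental, that $K$ is rational of degree $9$.

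I expect the main obstacle to be the conceptual justification of the base-change identity $K\otimes_{\mathbb{R}}\mathbb{C} = \mathbb{C}(\mathscr{L}_{\mathbb{C}})^{\SO_3(\mathbb{C})\times\SO_3(\mathbb{C})}$ over the non-algebraically-closed field $\mathbb{R}$; this is exactly the point at which the finer tools of \cite{CT-S} and \citep{EGA1} are needed, since over $\mathbb{R}$ the formation of the field of invariants and base change need not obviously commute, and one must also know that $K$ is finitely generated over $\mathbb{R}$ so that the descent is well-posed. By contrast, once this identity and the explicit real generators are in hand, the faithful-flat descent of the surjection $L\otimes\mathbb{C}\twoheadrightarrow K\otimes\mathbb{C}$ is routine. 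An equivalent endgame, avoiding faithful flatness, would be to observe that $K/L$ is finite (equal transcendence degrees, with $K$ finitely generated) and that $K\otimes_{\mathbb{R}}\mathbb{C}$ is free of rank $[K:L]$ over $L\otimes_{\mathbb{R}}\mathbb{C}$, forcing $[K:L]=1$; I would present whichever formulation reads more cleanly once the base-change step is secured.
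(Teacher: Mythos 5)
Your proposal is correct, but it takes a genuinely different route from the paper's proof. The paper never invokes the base-change identity of Section 2 that your argument leans on; instead it works directly with the real quotient: by Rosenlicht's theorem there is a dense open $U\subseteq \mathscr{L}$ admitting a geometric quotient $Y=U/G$ with $\mathbb{R}(Y)=\mathbb{R}(\mathscr{L})^G$, the nine invariants define a dominant rational map $\phi\colon Y\to Z=\Spec(\mathbb{R}[f_1,\ldots,f_9])\simeq \mathbb{A}^9_{\mathbb{R}}$, the separation of orbits in general position over $\mathbb{C}$ (i.e.\ the key step \emph{inside} the proof of Thm.~\ref{thm:generationGeneralCase}, not its statement) makes $\phi$ radicial, and then \cite[I.3.5.8]{EGA1} shows the extension $\mathbb{R}(f_1,\ldots,f_9)\subseteq \mathbb{R}(\mathscr{L})^G$ is purely inseparable, hence trivial in characteristic zero. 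Your descent argument instead takes the generation statement of Thm.~\ref{thm:generationGeneralCase} as a black box, computes $L\otimes_{\mathbb{R}}\mathbb{C}=\mathbb{C}(f_1,\ldots,f_9)$ via the $p\bar{p}$ trick, and concludes by faithful flatness; this is sound, with the only implicit point being that $f_i\otimes 1$ corresponds to $f_i$ under the base-change isomorphism (true, since that isomorphism is induced by viewing real polynomial functions as complex ones). Moreover, the obstacle you flag is less severe than you fear: for the quadratic Galois extension $\mathbb{C}/\mathbb{R}$ the identity $K\otimes_{\mathbb{R}}\mathbb{C}=\mathbb{C}(\mathscr{L}_{\mathbb{C}})^{\SL_2(\mathbb{C})\times\SL_2(\mathbb{C})}$ can be proved by hand---the same $q\bar{q}$ trick gives $\mathbb{C}(\mathscr{L}_{\mathbb{C}})=\mathbb{R}(\mathscr{L})\otimes_{\mathbb{R}}\mathbb{C}$, one decomposes an invariant $h=h_1+ih_2$ with $h_j\in\mathbb{R}(\mathscr{L})$ and observes each $h_j$ is $\SU(2)\times\SU(2)$-invariant, while the converse inclusion follows from Zariski density of $\SU(2)$ in $\SL_2(\mathbb{C})$---so your proof can be made fully self-contained without the scheme-theoretic machinery. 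In terms of trade-offs: the paper's radicial-morphism argument avoids ever needing to know that formation of invariant fields commutes with base change (precisely the statement you worried about), at the cost of invoking Rosenlicht and EGA; your argument, once the base-change identity is secured by Galois descent, is pure field theory and arguably more elementary, and it cleanly isolates why the real coefficients of the $f_i$ are what make $\mathbb{R}$-rationality follow from $\mathbb{C}$-rationality.
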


\begin{proof}

We offer a geometric proof based on the scheme-theoretic methods of \cite{CT-S} and \cite{GIT}. For ease of notation, let $X = \mathscr{L}$ and let $G = \SU(2)\times \SU(2)$. By Rosenlicht's theorem (see e.g. \cite[I.2]{GIT} or \cite[2.2]{CT-S}), there exists a Zariski-open (dense) open subset $U\subseteq X$ such that $Y = U/G$ is a geometric quotient, so that, in particular, the complex points $Y(\mathbb{C})$ correspond to the $G$-orbits on complexified Liouville space $\mathscr{L}_{\mathbb{C}}$, and the function field $\mathbb{R}(Y)$ is equal to the field of invariants $\mathbb{R}(X)^G$. On the other hand, let $Z = \Spec(\mathbb{R}[f_1, \ldots, f_9]) \simeq \mathbb{A}_{\mathbb{R}}^9$ be the irreducible affine variety whose coordinate ring is generated by the (algebraically independent) invariants $f_1, \ldots, f_9$. The inclusion of fields $\mathbb{R}(f_1, \ldots, f_9) \subseteq \mathbb{R}(X)^G$ gives a rational, dominant map $\phi: Y \rightarrow Z$. Because the invariants $f_1, \ldots, f_9$ separate orbits in general position over $\mathbb{C}$ (as shown in the proof of Thm. \ref{thm:generationGeneralCase}), it follows that $\phi$ is radicial (i.e. universally injective). By \cite[I.3.5.8]{EGA1}, this implies that $\mathbb{R}(X)^G$ is a radicial (i.e. purely inseparable) extension of $\mathbb{R}(f_1, \ldots, f_9)$. But in characteristic zero, there are no non-trivial radicial extensions, and therefore $\mathbb{R}(f_1, \ldots, f_9) = \mathbb{R}(X)^G$.  

\end{proof}

Note that a similar statement can be obtained for the case of symmetric states. In particular, the same argument shows that the field of real-valued invariants $\mathbb{R}(\mathscr{L}_{\rm{sym}})^{\SU(2)}$ is rational, generated by the six invariants given in Thm. \ref{thm:symmetricGeneration}.

 \section*{Acknowledgments} 
The authors are supported by the U.S. Department of Energy, Office of Science, Basic Energy Sciences, under Award Number DE-SC-SC0022134.
  The last author is also partially supported by an OVPR Postdoctoral Award at Wayne State University.

\bibliography{entanglement}

\begin{thebibliography}{HHHH09}

\bibitem[Bry02]{Brylinski}
Jean-Luc Brylinski.
\newblock Algebraic measures of entanglement.
\newblock In {\em Mathematics of quantum computation}, Comput. Math. Ser.,
  pages 3--23. Chapman \& Hall/CRC, Boca Raton, FL, 2002.

\bibitem[CTS07]{CT-S}
Jean-Louis Colliot-Th\'{e}l\`ene and Jean-Jacques Sansuc.
\newblock The rationality problem for fields of invariants under linear
  algebraic groups (with special regards to the {B}rauer group).
\newblock In {\em Algebraic groups and homogeneous spaces}, volume~19 of {\em
  Tata Inst. Fund. Res. Stud. Math.}, pages 113--186. Tata Inst. Fund. Res.,
  Mumbai, 2007.

\bibitem[DK15]{CIT}
Harm Derksen and Gregor Kemper.
\newblock {\em Computational invariant theory}, volume 130 of {\em
  Encyclopaedia of Mathematical Sciences}.
\newblock Springer, Heidelberg, enlarged edition, 2015.

\bibitem[Dol87]{dolgachev}
Igor~V. Dolgachev.
\newblock Rationality of fields of invariants.
\newblock In {\em Algebraic geometry, {B}owdoin, 1985 ({B}runswick, {M}aine,
  1985)}, volume~46 of {\em Proc. Sympos. Pure Math.}, pages 3--16. Amer. Math.
  Soc., Providence, RI, 1987.

\bibitem[Gam16]{Gamel}
Omar Gamel.
\newblock Entangled {B}loch spheres: {B}loch matrix and two-qubit state space.
\newblock {\em Phys. Rev. A}, 93(6):062320, 18, 2016.

\bibitem[GKP16]{X-states}
V.~Gerdt, A.~Khvedelidze, and Yu. Palii.
\newblock On the ring of local unitary invariants for mixed {$X$}-states of two
  qubits.
\newblock {\em Zap. Nauchn. Sem. S.-Peterburg. Otdel. Mat. Inst. Steklov.
  (POMI)}, 448(Teoriya Predstavleni\u{\i}, Dinamicheskie Sistemy, Kombinatornye
  Metody. XXVII):107--123, 2016.

\bibitem[Gro60]{EGA1}
A.~Grothendieck.
\newblock \'{E}l\'{e}ments de g\'{e}om\'{e}trie alg\'{e}brique. {I}. {L}e
  langage des sch\'{e}mas.
\newblock {\em Inst. Hautes \'{E}tudes Sci. Publ. Math.}, (4):228, 1960.

\bibitem[GW14]{Wallach:symmetric}
Gilad Gour and Nolan~R. Wallach.
\newblock On symmetric {SL}-invariant polynomials in four qubits.
\newblock In {\em Symmetry: representation theory and its applications}, volume
  257 of {\em Progr. Math.}, pages 259--267. Birkh\"{a}user/Springer, New York,
  2014.

\bibitem[HHHH09]{Horodecki}
Ryszard Horodecki, Pawe\l Horodecki, Micha\l Horodecki, and Karol Horodecki.
\newblock Quantum entanglement.
\newblock {\em Rev. Modern Phys.}, 81(2):865--942, 2009.

\bibitem[JKW07]{King}
P.D. Jarvis, R.C. King, and T.A. Welsh.
\newblock The mixed two-qubit system and the structure of its ring of local
  invariants.
\newblock {\em J. Phys. A}, 40(33):10083--10108, 2007.

\bibitem[KR84]{BinaryForms}
Joseph P.~S. Kung and Gian-Carlo Rota.
\newblock The invariant theory of binary forms.
\newblock {\em Bull. Amer. Math. Soc. (N.S.)}, 10(1):27--85, 1984.

\bibitem[MFK94]{GIT}
D.~Mumford, J.~Fogarty, and F.~Kirwan.
\newblock {\em Geometric invariant theory}, volume~34 of {\em Ergebnisse der
  Mathematik und ihrer Grenzgebiete (2) [Results in Mathematics and Related
  Areas (2)]}.
\newblock Springer-Verlag, Berlin, third edition, 1994.

\bibitem[MW02]{Wallach-Meyer}
David~A. Meyer and Noland Wallach.
\newblock Invariants for multiple qubits: the case of 3 qubits.
\newblock In {\em Mathematics of quantum computation}, Comput. Math. Ser.,
  pages 77--97. Chapman \& Hall/CRC, Boca Raton, FL, 2002.

\bibitem[NC00]{NielsenChuang}
Michael~A. Nielsen and Isaac~L. Chuang.
\newblock {\em Quantum computation and quantum information}.
\newblock Cambridge University Press, Cambridge, 2000.

\bibitem[VP94]{PV}
\`E.~B. Vinberg and V.~L. Popov.
\newblock {\em Invariant theory}, volume~55 of {\em Encyclopaedia of
  Mathematical Sciences}.
\newblock Springer-Verlag, Berlin, 1994.

\bibitem[Wal17]{Wallach-book}
Nolan~R. Wallach.
\newblock {\em Geometric invariant theory}.
\newblock Universitext. Springer, Cham, 2017.
\newblock Over the real and complex numbers.

\end{thebibliography}

\end{document}